\definecolor{persimmon}{rgb}{0.93, 0.35, 0.0}
\definecolor{cssgreen}{rgb}{0.0, 0.5, 0.0}
 \definecolor{bleudefrance}{rgb}{0.19, 0.55, 0.91}
\newtheorem{theorem}{Theorem}
\newtheorem{definition}{Definition}
\newcounter{tabenum}
\renewcommand{\thetabenum}{\alph{tabenum})}
\newcommand{\nextnum}{\refstepcounter{tabenum}\thetabenum}
\newcommand{\sizeof}[1]{|#1|}
\newcommand{\tridro}{\vartriangleright}
\newcommand{\trigau}{\vartriangleleft}
\newcommand{\state}[1]{\textbf{#1}\xspace}
\newcommand{\cc}[1]{\ensuremath{\mathbf{#1}}}
\newcommand{\onstack}[1]{\texttt{#1}\xspace}
\newcommand{\pop}{\onstack{pop}}
\newcommand{\peek}{\onstack{peek}}
\newcommand{\push}{\onstack{push}}
\newcommand{\bstack}{\boxdot}
\newcommand{\proofstep}[1]{%
 \addvspace{\medskipamount}%
 #1\@addpunct{:}\enspace\ignorespaces
}
\NewDocumentCommand{\fa}{ s t+ m o o}{
	\IfNoValueTF{#4}{\bool_set_false:N \l_show_heads_bool}{\bool_set_true:N \l_show_heads_bool \tl_set:Nn \l_heads_tl {#4}}
	\IfNoValueTF{#5}{\bool_set_false:N \l_show_stacks_bool}{\bool_set_true:N \l_show_stacks_bool \tl_set:Nn \l_stacks_tl {#5}}
	\IfBooleanTF{#2}{\fa_long:n { #3 }}
	{\IfBooleanTF{#1}{\ensuremath{\mathrm{\cc{\fa_short:n { #3 }}}}}{\ensuremath{\mathrm{\fa_short:n { #3 }}}}}
	}
\newcommand{\titred}{An in-between “implicit” and “explicit” complexity: Automata}
\title{\titred}
\author[C. Aubert]{Clément Aubert}
\address{INRIA, Université Paris-Est, LACL (EA 4219), UPEC, F-94010 Créteil, France}
\email{\href{mailto:clement.aubert@lacl.fr}{clement.aubert@lacl.fr}}
\urladdr{\url{https://lacl.fr/~caubert/}}
\thanks{This work was partly supported by the ANR-10-BLAN-0213 Logoi, the ANR-14-CE25-0005 ELICA and the ANR-11-INSE-0007 REVER}
\date{\today}
\begin{document}
\begin{abstract}
Implicit Computational Complexity (ICC) makes two aspects implicit, by manipulating programming languages rather than models of computation, and by internalizing the bounds rather than using external measure.
We survey how automata theory contributed to complexity with a machine-dependant with implicit bounds model.
\end{abstract}

\maketitle

\section*{Introduction}
This survey justifies a fine-grained view on the definition of \emph{what} Implicit Computational Complexity (ICC) wants to keep implicit:

\begin{center}
\resizebox{\textwidth}{!}{
\begin{tabular}{
l%
@{\quad} %
l%
@{\quad} %
l%
}
& \textbf{Machine-dependant} & \textbf{Machine-independant}\\[.5em]
\textbf{Explicit bounds} &
\begin{tabular}{@{}l@{}}Turing machine,\\ Random access machine, \\ Counter machine, \ldots \\\end{tabular} &
\begin{tabular}{@{}l@{}}Bounded recursion on notation \cite{Cobham1965},\\ Bounded arithmetic \cite{Buss1986},\\ Bounded linear logic \cite{Girard1992}, \ldots\\[.5em] \end{tabular}\\
\textbf{Implicit bounds} &
\begin{tabular}{@{}l@{}}Automaton,\\Auxiliary pushdown machine \cite{Cook1971},\\Boolean circuit, \ldots \\\end{tabular} &
\begin{tabular}{@{}l@{}}Descriptive complexity \cite{Fagin1973},\\Recursion on notation \cite{Bellantoni1992},\\ Tiered recurrence \cite{Leivant1993}, \ldots \\\end{tabular}
\end{tabular}
}
\end{center}

It is common to refer to the top-left as the \emph{classical}, or \emph{explicit}, approach to complexity.
The three others techniques \emph{keep something implicit}.
The bottom-right way is considered as the \emph{true} implicit complexity, the one that produces all the motivating perspectives: quasi-interpretations, non-size-increasing computation, soft lambda-calculus and linear logics, to name a few.
But there might be hesitations when defining ICC, for neither the \enquote{machine-independent} neither the \enquote{without explicit bounds} slogans are precise enough.

The machine-dependant with implicit bounds variant will be our subject.
It characterizes complexity classes by \emph{bounding the \emph{quality} of the resources of a machine rather than its \emph{quantity}}.
To our knowledge, automata are the most canonical model of this enterprise, which dates back to the 70's:
\textquote[{\cite[p.~88]{Ibarra1971}}]{we have attempted to characterize several tape and time complexity classes of Turing machines in terms of devices whose definitions involve only ways in which their infinite memory may be manipulated and no restrictions are imposed on the amount of memory that they use.}

\section*{Automata, characterizations and separation results}

\begin{definition}%
\label{ndfa_def}
For \(k \geqslant 1\), \(l \geqslant 0\), a \emph{\fa+{2MN+}[k][l] (\(\fa{2MN+}[k][l]\))}
is a tuple \(M = \{ \state{S}, \state{i}, \state{F}, A, B, \tridro, \trigau, \bstack, \sigma\}\) where:
\begin{itemize}
\item \(\state{S}\) is the finite set of \emph{states}, with \(\state{i} \in \state{S}\) the \emph{initial state} and \(\state{F} \subseteq \state{S}\) the set of \emph{accepting states};
\item \(A\) is the \emph{input alphabet}, \(B\) is the \emph{stack alphabet};
\item \(\tridro\) and \(\trigau\) are the \emph{left} and \emph{right endmarkers}, \(\tridro, \trigau \notin A\);
\item \(\bstack\) is the \emph{bottom symbol of the stack}, \(\bstack \notin B\);
\end{itemize}
From now on, we let \(A_{\bowtie}\) (resp. \(B_{\bstack}\)) be \(A \cup \{ \tridro, \trigau\}\) (resp. \(B \cup \{ \bstack\}\)).
\begin{itemize}
\item \(\sigma \subseteq (\state{S} \times (A_{\bowtie})^k \times (B_{\bstack})^l) \times (\state{S} \times \{-1, 0, +1\}^k \times \{\pop, \peek, \push(b)\}^l) \) is the \emph{transition relation}, where \(-1\) means to move the head one cell to the left, \(0\) means to keep the head on the current cell and \(+1\) means to move it one cell to the right.
Regarding the pushdown stacks, \(\pop\) means \enquote{erase the top symbol}, \(\peek\) \enquote{do nothing}, and, for all \(b \in B\), \(\push(b)\) is \enquote{write \(b\) on top of the stack}.
\end{itemize}
Given an input \(n \in A^*\), the automata is initiated in state \state{i}, with \(\tridro n \trigau\) written on its only tape, all its heads at \(\tridro\) and all its stacks containing \(\bstack\).
It makes (non-deterministic) transitions according to \(\sigma\) and halt accepting as soon as it reaches a state belonging to \(\state{F}\).
We impose that the heads cannot move beyond the endmarkers and that the bottom stack symbol \(\bstack\) cannot be erased (\enquote{popped}).

Finally, we denote with \(\fa*{2MN+}[k][l]\) the class of languages recognized by the \(\fa{2MN+}[k][l]\) automaton.
Moreover, we let \(\fa*{2MN+}[*][l] = \cup_{k \geqslant 1} \fa*{2MN+}[k][l]\).
\end{definition}

\begin{theorem}
The following table gives the correspondence between automata and languages (or predicates):

\begin{center}
\begin{tabular}{>{\nextnum}l@{\hspace{\labelsep}}*{2}{l@{\quad\quad}}l}
\multicolumn{1}{c}{} & \textbf{Automata} & \textbf{Language}\\[.5em]
\label{tableau-comput} & \(\fa*{2MN+}[1][2]\) & Computable\\
\label{tableau-p} & \(\fa*{2MN+}[*][1]\) & Polynomial time\\
\label{tableau-l} & \(\fa*{2MN+}[*][0]\) & Logarithmic space\\
\label{tableau-context} & \(\fa*{2MN+}[1][1]\) & Context-free\\
\label{tableau-regular} & \(\fa*{2MN+}[1][0]\) & Regular
\end{tabular}
\end{center}
\end{theorem}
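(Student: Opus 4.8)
The plan is to read the table not as a single statement but as five independent characterizations, each matching one class $\fa*{2MN+}[k][l]$ against a standard language or complexity class, organized along the two orthogonal resources of the model: the $k$ read-only heads, which act as a \emph{space} budget, and the $l$ pushdown stacks, which act as \emph{unbounded storage and recursion}. The two guiding dictionary entries I would fix first are \enquote{$k$ heads $\simeq O(\log n)$ bits of worktape} and \enquote{one stack $\simeq$ polynomial-time recursion, two stacks $\simeq$ a full Turing tape}; each cell of the table is then an instantiation.

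The two extreme rows I would dispatch by invoking classical equivalences. For $\fa*{2MN+}[1][0]$ (single head, no stack) the Rabin--Scott/Shepherdson theorem that two-way finite automata recognize exactly the regular languages applies: a crossing-sequence (or \enquote{behavior function}) construction compiles the $\{-1,0,+1\}$ motion away into a one-way NFA, which is then determinized. For $\fa*{2MN+}[1][2]$ (single head, two stacks) I would exhibit the standard simulation of a Turing tape by two stacks, one holding the cells left of the head in reverse and the other those to its right, so that each tape step becomes a bounded sequence of stack operations; the accept-on-final-state convention then yields exactly the recursively enumerable (\enquote{computable}) languages.

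The two middle rows carry the real content. For $\fa*{2MN+}[*][0]=\cc{NL}$ I would prove both inclusions of the multi-head characterization of nondeterministic logarithmic space \cite{Ibarra1971}: $k$ heads on an input of length $n$ occupy one of $n^k$ joint positions, recordable in $O(k\log n)=O(\log n)$ bits, so a multi-head two-way NFA is simulated in nondeterministic log space; conversely an $O(\log n)$-bit worktape holds a value below a fixed polynomial in $n$, which a constant number of heads can store and update by treating their positions as counters. Adding one stack lifts this to $\fa*{2MN+}[*][1]=\cc{P}$, which is the step I expect to be the main obstacle, since it is precisely Cook's auxiliary-pushdown theorem \cite{Cook1971}: a logarithmically space-bounded device with one pushdown store accepts exactly $\cc{P}=\mathrm{DTIME}\bigl(2^{O(\log n)}\bigr)$, and nondeterminism adds nothing. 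The hard inclusion into $\cc{P}$ is not a counting bound, since the unbounded stack yields exponentially many full configurations; instead one computes, by a dynamic-programming fixpoint over the merely polynomially many \emph{surface} configurations (state, head tuple, top-of-stack symbol), the relation of pairs joined by a balanced push/pop computation, and then tests reachability of an accepting surface configuration. Reproducing this, together with the converse simulation of an arbitrary polynomial-time machine, is the technical heart of the theorem.

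The last row, $\fa*{2MN+}[1][1]$ against the context-free languages, I would base on the classical equivalence (in both directions) between nondeterministic pushdown automata and context-free grammars. Here I would flag rather than cite the one genuinely delicate point: that equivalence is established for \emph{one-way} pushdown automata, whereas the model permits $\{-1,0,+1\}$ motion, and a two-way head with a single stack is strictly stronger---it accepts the non-context-free language $a^n b^n c^n$ by rewinding the input between two independent stack passes. Pinning this row down therefore forces reading the single head here as one-way; making that restriction explicit is, after Cook's theorem, the subtlety I would be most careful about.
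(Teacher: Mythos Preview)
Your decomposition and the arguments you sketch track the paper's own proof almost row for row: two stacks simulating a Turing tape for the computable case; $k$ head positions encoding $O(\log n)$ bits for the logarithmic-space case; Cook's auxiliary-pushdown theorem, with the surface-configuration dynamic-programming argument, for the polynomial-time case; and the classical Kleene/Shepherdson and Chomsky results for the regular and context-free rows. The paper's version is terser---it simply cites \cite{Cook1971,Aho1968,Gluck2013} for the memoization step and \cite{Chomsky1962,Kleene1956} for the last two rows---but the strategy is identical.

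Your flag on the context-free row is not a gap in your argument but a genuine defect in the statement as printed. The table writes $\fa*{2MN+}[1][1]$ with the explicit two-way prefix, yet the paper's only justification is a citation of Chomsky's equivalence for \emph{one-way} pushdown automata. As you correctly observe, a two-way single-head nondeterministic pushdown automaton already accepts $\{a^n b^n c^n\}$ (push on the $a$'s, pop on the $b$'s, rewind, push on the $b$'s, pop on the $c$'s) and hence strictly exceeds the context-free languages. The paper neither acknowledges nor resolves this; your reading of that row as implicitly one-way is the only way to make it true, and spotting it is careful work rather than a shortcoming.
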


\begin{proof}
We just sketch them, and provide references for the most difficult one.
We always suppose that \(n\) is the input and \(\sizeof{n}\) its size.

\proofstep{\ref{tableau-comput} \(\subseteq\)} Given a computable language, by the Church–Turing thesis, there exists a Turing machine \(T\) that decides it.
Using some classical theorem, we can always assume that \(T\) has a single reading head and a single read-write tape.
A \fa{2MN+}[1][2] can simulate \(T\) by simulating the movement of the read-only head with its head, and the content of the read-write tape with its two pushdown stacks.
The first (resp. second) pushdown stack store the content on the left (resp. right) of the read-write head.

\proofstep{\ref{tableau-comput} \(\supseteq\)} As \fa{2MN+}[1][2] are restrictions of Turing Machines, their simulation is obvious.

\proofstep{\ref{tableau-p} \(\subseteq\)}
This part amounts to designing an equivalent Turing machine whose movements of heads follow a regular pattern.
That permits to seamlessly simulate the content of the read-write tape with a pushdown stack.
A complete proof~\cite[pp.~9--11]{Cook1971} as well as a precise algorithm~\cite[pp.~238--240]{Wagner1986} can be found in the literature.

\proofstep{\ref{tableau-p} \(\supseteq\)}
This way gave birth to memoization.
In a nutshell, simulating a \(\fa{2MN+}[*][1]\) with a polynomial-time Turing machine cannot amount to simulate step-by-step the automaton.
The reason is that for any automaton, one can design an automaton that recognizes the same language but runs exponentially slower \cite[p.~197]{Aho1968}.
The technique invented by Alfred V.~Aho et. al~\cite{Aho1968} and made popular by Stephen A.~Cook consists in building a \enquote{memoization table} that allows the Turing machine to create shortcuts in the simulation of the automaton, decreasing drastically its computation time.
It is a \enquote{\emph{clever evaluation strategy}, applicable whenever the results of certain computations are needed more than once}~\cite[p.~348]{Amtoft1992}.
A nice explanation in the case of single head automata is in a recent and short article by R. Glück~\cite{Gluck2013}.

\proofstep{\ref{tableau-l} \(\subseteq\)}
Let \(T\) be a Turing machine deciding the predicate with \(m \times \log(\sizeof{n})\) space.
Remark that an integer represented by a binary string of length \(p\) is no greater than \(2^p\), which takes \(2^p\) bits in unary.
Then, a binary string of length \(\log (\sizeof{n})\) cannot be greater than \(\sizeof{n}\).
So the content of the read-write tape and the position of its head is encoded as distances between \(\tridro\) and the read-only heads of the automata.
That needs to compute simple operations (modulo, multiplication and remaining of a division) thanks to an extra head, but \(m+3\) heads \cite[pp.~191--192]{Sudborough1977} are sufficient to perform this simulation.
Simplest and the fastest simulations exists~\cite[pp.~223--225]{Wagner1986}, but the overhead in terms of heads is more important.

\proofstep{\ref{tableau-l} \(\supseteq\)} The addresses of the heads on the input \(n\) takes \(\log (\sizeof{n})\) to we written, hence a \(\log\)-space Turing machine can easily simulate a \(\fa{2MN+}[*][0]\).

\proofstep{\ref{tableau-context} and \ref{tableau-regular}}
The proofs of those two fundamentals results are beyond the scope of this small survey.
Their original proofs are respectively due to \citeauthor{Chomsky1962} \cite[Theorem~1, p.~188]{Chomsky1962} and \citeauthor{Kleene1956} \cite[Theorem~6]{Kleene1956}.

\end{proof}

In the \ref{tableau-l} case, taking a \emph{deterministic} automata \(\fa{2MD-}[*][0]\) yields a characterization of \emph{deterministic} \(\log\)-space predicate \cc{L} (whereas we characterized here the non-deterministic case \cc{NL}).
In the \ref{tableau-p} case, both deterministic \emph{and} non-deterministic automata characterize \emph{deterministic} polynomial time.

Automata can also be restricted to be \(1\)-way (simply remove the \(-1\) instruction from the transition relation).
In the \ref{tableau-regular} case of regular language (which are equal to linear space), we get in fact \(\fa*{1MD-}[1][0] = \fa*{2MD-}[1][0] = \fa*{1MN-}[1][0] = \fa*{2MD-}[1][0]\).

\begin{theorem}[Other results of interest]
\begin{multicols}{2}
\begin{align*}
\fa*{1MD-}[k][0] & \subsetneq \fa*{2MD-}[k][0]	\tag{\cite[p.~95]{Holzer2008}}\\
\fa*{1MN-}[k][0] &\subsetneq \fa*{2MN-}[k][0]	\tag{\cite[p.~95]{Holzer2008}}\\
\fa*{1MD-}[k][0] & \subsetneq \fa*{1MN-}[2][0]	\tag{\cite[p.~339]{Yao1978}}\\
\fa*{1MD+}[k][1] & \subsetneq \fa*{1MN+}[k][1]	\tag{\cite{Chrobak1986}}\\
\fa*{1MN-}[k][0] & \subseteq \fa*{2MD+}[k][0]	\tag{\cite[p.~190]{Sudborough1977}}\\
\fa*{2MN+}[k][1] & \subseteq \fa*{2MD+}[4 \times k][1] \tag{\cite[p.~214]{Sudborough1977a}}\\
\fa*{2MN-}[k][0] & \subseteq \fa*{2MN+}[2\times k][1] \tag{\cite[p.~189]{Sudborough1977}}\\
\fa*{1MD+}[k][1] & \subsetneq \fa*{1MN+}[k][1]	\tag{\cite{Chrobak1986}}\\
\cc{NL} &= \cc{L} \tag{\cite[pp.~75--76]{Sudborough1973}} \\
& \text{~iff~} \\
\fa*{1MN-}[2][0] & \subseteq \fa*{2MD-}[k][0]
\end{align*}
\end{multicols}
\end{theorem}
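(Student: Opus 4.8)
The plan is to address only the last and most substantial line, the equivalence $\cc{NL} = \cc{L} \iff \fa*{1MN-}[2][0] \subseteq \fa*{2MD-}[k][0]$, since the remaining separations and inclusions are quoted verbatim from their sources; here I read the right-hand side as membership in the union $\fa*{2MD-}[*][0] = \cc{L}$, i.e.\ the recognizing deterministic machine is allowed some fixed number $k$ of heads. I would prove the two directions separately, leaning on the characterizations already established in the main theorem, namely $\fa*{2MN-}[*][0] = \cc{NL}$ and $\fa*{2MD-}[*][0] = \cc{L}$, together with the obvious inclusion $\fa*{1MN-}[2][0] \subseteq \fa*{2MN-}[*][0]$ obtained by viewing a one-way two-head nondeterministic automaton as a special two-way multihead one.

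The forward direction is immediate. Assuming $\cc{NL} = \cc{L}$, the chain $\fa*{1MN-}[2][0] \subseteq \fa*{2MN-}[*][0] = \cc{NL} = \cc{L} = \fa*{2MD-}[*][0]$ already places every language of $\fa*{1MN-}[2][0]$ into the deterministic multihead class, so each sits at some finite head level $k$.

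The reverse direction is where the work lies, and I would route it through a hardest-language argument in the style of Greibach. First I would exhibit a single language $L^{*}$, recognized by one fixed two-head one-way nondeterministic automaton (hence $L^{*} \in \fa*{1MN-}[2][0]$), that is $\cc{NL}$-hard under reductions computable by deterministic two-way multihead automata, equivalently log-space reductions. Concretely, $L^{*}$ should encode a generic graph-reachability computation together with an interleaved certificate, arranged so that two left-to-right heads can guess and verify an accepting path by synchronizing their positions. Granting this, if $\fa*{1MN-}[2][0] \subseteq \fa*{2MD-}[k][0]$ then in particular $L^{*} \in \fa*{2MD-}[k][0] \subseteq \cc{L}$; closure of $\cc{L}$ under the reductions then yields $\cc{NL} \subseteq \cc{L}$, and since $\cc{L} \subseteq \cc{NL}$ holds trivially, $\cc{NL} = \cc{L}$.

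The main obstacle is the construction and correctness of $L^{*}$: one must design the encoding so that the nondeterministic verification of an $\cc{NL}$ computation genuinely fits within the limited resources of two one-way heads and no stack, and simultaneously check that $\cc{L} = \fa*{2MD-}[*][0]$ is closed under the chosen reductions, so that membership of the single complete language propagates to all of $\cc{NL}$. Packaging the completeness at a uniform, fixed head count $k$ — rather than merely placing each language of $\fa*{1MN-}[2][0]$ somewhere in the deterministic hierarchy — is the delicate point, and it is precisely what the hardest-language formulation of \cite{Sudborough1973} supplies.
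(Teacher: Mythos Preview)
The paper gives no proof of this theorem at all: it is a list of results quoted from the literature, with each line carrying only a citation tag. There is therefore nothing in the paper to compare your argument against.

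That said, your sketch for the final equivalence is correct and is exactly Sudborough's strategy in \cite{Sudborough1973}: the forward direction is immediate from the characterizations $\fa*{2MN-}[*][0] = \cc{NL}$ and $\fa*{2MD-}[*][0] = \cc{L}$, and the converse goes through a single $\cc{NL}$-complete language (under log-space, equivalently $\fa*{2MD-}[*][0]$-computable, reductions) that lives in $\fa*{1MN-}[2][0]$. Your reading of the right-hand side as inclusion in the union $\fa*{2MD-}[*][0]$ is the intended one; the free $k$ in the paper's display is a notational looseness, and the hardest-language formulation is precisely what lets the hypothesis be tested on a single language at a fixed head count rather than on the whole class.
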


\begin{theorem}
For \(l < 2\), \(k+1\) heads are strictly stronger than \(k\) heads.
\end{theorem}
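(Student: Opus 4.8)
The plan is to establish, for each admissible $l$ (that is $l\in\{0,1\}$, the two values with $l<2$), the strict inclusion $\fa*{2MN+}[k][l] \subsetneq \fa*{2MN+}[k+1][l]$, and to note that the argument carries over verbatim to the deterministic variants. The inclusion $\subseteq$ is free: a $\fa{2MN+}[k+1][l]$ may leave one of its heads parked on $\tridro$ for the whole computation and run any given $\fa{2MN+}[k][l]$ on the remaining $k$ heads. All the content is therefore in producing, for every $k$, a \emph{witness} language $L_{k+1}$ that a $(k+1)$-head automaton recognizes but that no $k$-head automaton does.

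First I would fix the witness. The guiding idea is that recognition of $L_{k+1}$ should force the machine to inspect $k+1$ positions of the input \emph{simultaneously}, a task for which $k+1$ heads plainly suffice and for which $k$ heads ought to be insufficient. Concretely one takes inputs encoding a list of blocks together with a matching condition imposed on $(k+1)$-tuples of block positions; a $\fa{2MN+}[k+1][l]$ accepts $L_{k+1}$ by placing its $k+1$ heads on the relevant blocks and scanning them in lock-step to check the condition, using the stack (when $l=1$) only for auxiliary bookkeeping. Verifying $L_{k+1} \in \fa*{2MN+}[k+1][l]$ is then a routine automaton construction.

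The real work is the lower bound $L_{k+1} \notin \fa*{2MN+}[k][l]$, and this is where I expect the main obstacle. The lever is a counting argument on configurations: a $\fa{2MN+}[k][0]$ with $s$ states has at most $s\cdot(\sizeof{n}+2)^{k}$ distinct \emph{surface} configurations, a state together with $k$ head positions, hence a number polynomial in $\sizeof{n}$ of degree $k$; and since table entry \ref{tableau-l} already confines the whole hierarchy to $\cc{NL}$, non-looping computations halt within roughly this many steps. One then exhibits a family of inputs whose number of distinguishable behaviours grows like $\sizeof{n}^{k+1}$, so that a pigeonhole (fooling-set) argument drives two inputs of opposite membership through the same configuration profile, forcing them to be accepted or rejected together, a contradiction. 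The delicate point is that the motion is \emph{two-way}: because heads may revisit cells, the naive one-pass crossing-sequence count must be replaced by an argument controlling the full interaction pattern of the heads.

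For $l=1$ the same scheme applies, but the unbounded stack must be neutralized. Here the witness is chosen so that the stack, a sequentially built last-in-first-out memory, cannot substitute for the random access to a fixed input position that an extra head provides; the lower bound then counts surface configurations with the stack's influence summarised by the standard finite analysis of pushdown behaviours, keeping the governing count polynomial of degree $k$. Both separations are classical: the stack-free hierarchy is treated in the multi-head survey \cite{Holzer2008}, while the pushdown case traces back to the multihead stack-automata characterizations of \cite{Ibarra1971}.
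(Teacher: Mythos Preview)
The paper does not prove this theorem: it asserts that the hierarchy holds in every variant (one- or two-way, deterministic or not, with or without a stack), remarks that no unified survey exists, and sends the reader to the original papers \cite{Yao1978,Monien1976,Monien1977,Chrobak1986,Ibarra1973}. Your write-up therefore already goes further than the paper itself, and the template you propose --- a witness language forcing simultaneous inspection of $k+1$ positions, together with a configuration-counting lower bound --- is indeed the shape of the classical arguments.

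Where your sketch remains a sketch is exactly where the difficulty lies. For $l=0$ you correctly flag that two-way motion defeats the naive crossing-sequence count and call for ``an argument controlling the full interaction pattern of the heads,'' but that sentence \emph{is} the hard theorem in the cited papers, not a step one fills in afterwards. For $l=1$ your treatment is thinner still: ``the stack's influence summarised by the standard finite analysis of pushdown behaviours'' names a hope rather than a mechanism, and your reference to \cite{Ibarra1971} is off --- that is the characterisation paper quoted in the introduction, not a source for the head hierarchy, which the paper attributes instead to \cite{Chrobak1986} and \cite{Monien1977}. You also treat only the two-way nondeterministic case and assert that the rest ``carries over verbatim''; the paper is explicit that all eight variants are meant, and in the literature the one-way and two-way separations in fact use genuinely different techniques.
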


This results holds in all situations,  whenever automata are \(1\)- or \(2\)-way, deterministic or not, with or without pushdown stacks%
\footnote{To be fair, the author could not find a proof that this is the case for \(1\)-way non-deterministic finite automata with one stack.}%
.
Even if some notes on the techniques and historic of those results exist \cites[Chap.~4, Sect.~5.3]{Vitanyi1990}[pp.~67--68]{Monien1980}, no extensive survey was found, so the reader has to go for the original research papers \cites[p.~338]{Yao1978}[p.~106]{Monien1976}[p.~383]{Monien1977}[p.~179]{Chrobak1986}[p.~35]{Ibarra1973}.

\section*{Conclusion}
Automata theory is a major subject of computer science, very active and providing all kind of extensions and restrictions to automata.
It contributed to \enquote{implicit} complexity in the sense that no external bound or measure is needed to tame their computational power: by specifying their number of ways, heads and stacks, one knows \emph{in advance} the computational power of the device.

Those \enquote{implicit characterizations} provided decisive hints and tools to design two (truly implicit) bounded programming languages \cite{Aubert2014b, Bagnol2014}.
Numerous other inspiring results remains to be explored and should benefit to ICC.

We did not mentioned finite state transducer, which are functional finite automata, but they also provide nice characterizations of functional complexity and beautiful results (even recently \cite{Filiot2013}, where the main theorem fits in the title).
The status of the input is also of interest: automata can bee feeded with \(1\)-way inputs, trees, etc.
That makes the expressive power to be really parametric in the input \emph{and} the machine

\renewcommand*{\bibfont}{\footnotesize}
\printbibliography

\section*{Acknowledgements}
The author wish to thank, in no particular order,
M. Bagnol,
G. Bonfante,
A. Durand,
P. Jacobé de Naurois,
U. dal Lago,
P. Pistone,
P.-A. Reynier,
U. Schöpp,
T. Seiller
and
the \href{http://tex.stackexchange.com/}{\TeX{}-\LaTeX{} Stack Exchange} community.
\end{document}